\documentclass[11pt,letterpaper]{article}
\usepackage{comment}
\usepackage{xspace}
\usepackage[T1]{fontenc}
\usepackage[margin=1in]{geometry}
\usepackage{mathtools,amssymb,amsthm}
\usepackage{newtxtext,newtxmath}
\usepackage{microtype}
\usepackage[dvipsnames]{xcolor}
\usepackage{tikz}
\usetikzlibrary{arrows.meta,calc,positioning,decorations.pathreplacing}
\usepackage{float}
\usepackage{wrapfig}
\usepackage{booktabs}
\usepackage{enumitem}
\usepackage{url}
\usepackage{aliascnt}
\usepackage[hidelinks]{hyperref}
\usepackage[capitalise,noabbrev]{cleveref}
\input{glyphtounicode}
\hypersetup{
  pdftitle={Dense Pinwheel Packing Is Strongly NP-Complete},
  pdfauthor={},
  pdfsubject={Computational complexity of pinwheel packing},
  pdfkeywords={pinwheel scheduling, exact covering systems, NP-completeness},
  pdflang={en-US}
}

\setlist{nosep,leftmargin=2.1em}

\newtheorem{theorem}{Theorem}[section]
\newaliascnt{lemma}{theorem}
\newtheorem{lemma}[lemma]{Lemma}
\aliascntresetthe{lemma}
\newaliascnt{proposition}{theorem}
\newtheorem{proposition}[proposition]{Proposition}
\aliascntresetthe{proposition}
\newaliascnt{corollary}{theorem}

\aliascntresetthe{corollary}
\theoremstyle{definition}
\newaliascnt{definition}{theorem}
\newtheorem{definition}[definition]{Definition}
\aliascntresetthe{definition}
\newaliascnt{remark}{theorem}

\aliascntresetthe{remark}

\newcommand{\Z}{\mathbb Z}
\newcommand{\PP}{\textsc{Pinwheel Packing}\xspace}
\newcommand{\DPP}{\textsc{Dense Pinwheel Packing}\xspace}
\newcommand{\TP}{\textsc{Sparse Tripartite Triangle Partition}\xspace}
\newcommand{\dens}{\operatorname{dens}}

\definecolor{pwblue}{RGB}{73,126,169}
\definecolor{pwbluefill}{RGB}{219,234,245}
\definecolor{pworange}{RGB}{201,119,48}
\definecolor{pworangefill}{RGB}{248,226,199}
\definecolor{pwgreen}{RGB}{77,139,105}
\definecolor{pwgreenfill}{RGB}{218,237,225}
\definecolor{pwred}{RGB}{173,73,73}
\definecolor{pwgray}{RGB}{105,112,119}
\definecolor{pwgrayfill}{RGB}{238,240,242}

\tikzset{
  pw arrow/.style={-{Latex[length=2mm]},draw=pwgray,semithick},
  pw brace/.style={decorate,decoration={brace,amplitude=3pt}},
  pw dot/.style={circle,draw=pwgray,fill=white,inner sep=0pt,minimum size=4.5mm},
  pw task/.style={
    rounded corners=1.5pt,
    draw=pwgray,
    fill=pwgrayfill,
    inner sep=2pt,
    minimum height=5.5mm
  },
  pw x/.style={circle,draw=pwblue,fill=pwbluefill,inner sep=0pt,minimum size=5mm},
  pw y/.style={circle,draw=pworange,fill=pworangefill,inner sep=0pt,minimum size=5mm},
  pw z/.style={circle,draw=pwgreen,fill=pwgreenfill,inner sep=0pt,minimum size=5mm}
}

\newcommand{\pwMarkerCollisionDiagram}{%
  \begin{tikzpicture}[font=\scriptsize,x=0.54cm,y=0.48cm]
    \def\pwGridMax{4}
    \def\pwBlueIndex{1}
    \def\pwOrangeIndex{3}
    \coordinate (leftGrid) at (0,0);
    \coordinate (rightGrid) at ($(leftGrid)+(6,0)$);
    \coordinate (leftBlue) at ($(leftGrid)+(\pwBlueIndex,0)$);
    \coordinate (leftOrange) at ($(leftGrid)+(0,\pwOrangeIndex)$);
    \coordinate (leftHit) at
      ($(leftGrid)+(\pwBlueIndex,\pwOrangeIndex)$);
    \coordinate (rightBlue) at
      ($(rightGrid)+(\pwBlueIndex,\pwBlueIndex)$);
    \coordinate (rightOrange) at
      ($(rightGrid)+(\pwOrangeIndex,\pwOrangeIndex)$);
    \coordinate (rightMiss) at
      ($(rightGrid)+(\pwBlueIndex,\pwOrangeIndex)$);

    \node[text=pwgray] at ($(leftGrid)+(2,5)$) {\(\gcd(m_u,m_v)=1\)};
    \draw[pwblue,line width=2pt,opacity=0.55]
      ($(leftBlue)+(0,-0.25)$) -- ($(leftBlue)+(0,4.25)$);
    \draw[pworange,line width=2pt,opacity=0.55]
      ($(leftOrange)+(-0.25,0)$) -- ($(leftOrange)+(4.25,0)$);
    \foreach \xx in {0,...,4} {
      \foreach \yy in {0,...,4} {
        \fill[pwgray] ($(leftGrid)+(\xx,\yy)$) circle (0.65pt);
      }
    }
    \fill[pwred] (leftHit) circle (2.7pt);
    \node[text=pwblue] at ($(leftBlue)+(0,-0.65)$) {\(W_u\)};
    \node[text=pworange] at ($(leftOrange)+(-0.62,0)$) {\(W_v\)};

    \draw[pwgray,dashed] ($(rightGrid)+(-1,-0.8)$)
      -- ($(rightGrid)+(-1,5.35)$);

    \node[text=pwgray] at ($(rightGrid)+(2,5)$)
      {\(p_T\mid\gcd(m_u,m_v)\)};
    \draw[pwblue,line width=2pt,opacity=0.55]
      ($(rightGrid)+(\pwBlueIndex,-0.25)$)
      -- ($(rightGrid)+(\pwBlueIndex,4.25)$);
    \draw[pworange,line width=2pt,opacity=0.55]
      ($(rightGrid)+(-0.25,\pwOrangeIndex)$)
      -- ($(rightGrid)+(4.25,\pwOrangeIndex)$);
    \draw[pwgray!45] (rightGrid) -- ($(rightGrid)+(\pwGridMax,\pwGridMax)$);
    \foreach \xx in {0,...,\pwGridMax} {
      \foreach \yy in {0,...,\pwGridMax} {
        \draw[pwgray!45] ($(rightGrid)+(\xx,\yy)$) circle (0.65pt);
      }
      \fill[pwgray] ($(rightGrid)+(\xx,\xx)$) circle (1.25pt);
    }
    \fill[pwblue] (rightBlue) circle (2.5pt);
    \fill[pworange] (rightOrange) circle (2.5pt);
    \draw[pwred,semithick] ($(rightMiss)+(-0.18,0.18)$)
      -- ($(rightMiss)+(0.18,-0.18)$);
    \draw[pwred,semithick] ($(rightMiss)+(-0.18,-0.18)$)
      -- ($(rightMiss)+(0.18,0.18)$);
    \node[text=pwblue] at
      ($(rightGrid)+(\pwBlueIndex,-0.65)$) {\(W_u\)};
    \node[text=pworange] at
      ($(rightGrid)+(4.62,\pwOrangeIndex)$) {\(W_v\)};
  \end{tikzpicture}%
}

\newcommand{\pwLocalQuotientDiagram}{%
  \begin{tikzpicture}[x=0.88cm,y=0.68cm,font=\normalsize]
    \definecolor{xblue}{RGB}{203,224,242}
    \definecolor{yorange}{RGB}{250,220,180}
    \definecolor{zgreen}{RGB}{205,232,205}
    \def\pwSpecialColumns{3}
    \def\pwTotalColumns{7}
    \def\pwRows{3}
    \coordinate (grid) at (0,0);
    \coordinate (remaining) at ($(grid)+(\pwSpecialColumns,0)$);
    \coordinate (gridTopRight) at
      ($(grid)+(\pwTotalColumns,\pwRows)$);

    \foreach \column/\tone/\name in {0/xblue/x,1/yorange/y,2/zgreen/z} {
      \fill[\tone] ($(grid)+(\column,0)$) rectangle +(1,\pwRows);
      \foreach \row in {0,...,2} {
        \node at ($(grid)+(\column+0.5,\row+0.5)$) {\(\name\)};
      }
    }
    \foreach \row/\tone/\name in {0/zgreen/z,1/yorange/y,2/xblue/x} {
      \fill[\tone] ($(remaining)+(0,\row)$)
        rectangle ($(grid)+(\pwTotalColumns,\row+1)$);
      \foreach \column in {3,...,6} {
        \node at ($(grid)+(\column+0.5,\row+0.5)$) {\(\name\)};
      }
    }
    \draw[step=1] (grid) grid (gridTopRight);
    \foreach \column/\name in {0/x,1/y,2/z} {
      \node[below] at ($(grid)+(\column+0.5,0)$) {\(c_\name\)};
    }
    \draw[decorate,decoration={brace,amplitude=4pt,mirror}]
      ($(remaining)+(0,-0.42)$) -- ($(grid)+(\pwTotalColumns,-0.42)$)
      node[midway,below=5pt] {\(p-3\) remaining columns};
    \foreach \row/\label in {0/2,1/1,2/0} {
      \node[left] at ($(grid)+(0,\row+0.5)$) {\(\label\)};
    }
  \end{tikzpicture}%
}
\newcommand{\pwDiamondDiagram}{%
  \begin{tikzpicture}[font=\scriptsize,x=0.9cm,y=0.9cm]
    \def\pwHalfWidth{1.6}
    \def\pwHalfHeight{1.15}
    \coordinate (diamondCenter) at (1.6,0);

    \node[pw x] (e) at
      ($(diamondCenter)+(-\pwHalfWidth,0)$) {\(e\)};
    \node[pw x] (a) at
      ($(diamondCenter)+(\pwHalfWidth,0)$) {\(a_{\tau,e}\)};
    \node[pw y] (u) at
      ($(diamondCenter)+(0,\pwHalfHeight)$) {\(u_{\tau,e}\)};
    \node[pw z] (v) at
      ($(diamondCenter)+(0,-\pwHalfHeight)$) {\(v_{\tau,e}\)};

    \draw[pwgray] (e) -- (u) -- (a) -- (v) -- (e);
    \draw[pwgray] (u) -- (v);
    \draw[pwgray,dashed] (e) -- (a);

    \node[above=3pt of u,text=pwgray] {middle vertices};
    \node[below=5pt of v,text=pwgray]
      {\(\{e,u_{\tau,e},v_{\tau,e}\}
        \quad
        \{a_{\tau,e},u_{\tau,e},v_{\tau,e}\}\)};
  \end{tikzpicture}%
}

\newcommand{\pwTripleGadgetDiagram}{%
  \begin{tikzpicture}[font=\scriptsize,scale=0.57]
    \def\pwApexHalfWidth{1.65}
    \def\pwApexTop{1.8}
    \def\pwApexBottom{-1.05}
    \def\pwElementHalfWidth{3.75}
    \def\pwElementTop{4.25}
    \def\pwElementBottom{-2.25}
    \def\pwDiamondHalfWidth{0.62cm}
    \pgfmathsetmacro{\pwCentralY}{(\pwApexTop+2*\pwApexBottom)/3}

    \coordinate (central) at (0,\pwCentralY);
    \node[pw x] (ar) at (0,\pwApexTop) {\(a_r\)};
    \node[pw y] (ab) at (-\pwApexHalfWidth,\pwApexBottom) {\(a_b\)};
    \node[pw z] (ay) at (\pwApexHalfWidth,\pwApexBottom) {\(a_y\)};
    \draw[pwgray] (ar) -- (ab)
      (ab) -- (ay)
      (ay) -- (ar);

    \node[pw x] (er) at (0,\pwElementTop) {\(r\)};
    \node[pw y] (eb) at (-\pwElementHalfWidth,\pwElementBottom) {\(b\)};
    \node[pw z] (ey) at (\pwElementHalfWidth,\pwElementBottom) {\(y\)};

    \coordinate (rmid) at ($(ar)!0.5!(er)$);
    \node[pw y] (rleft) at ($(rmid)!\pwDiamondHalfWidth!90:(er)$) {};
    \node[pw z] (rright) at ($(rmid)!\pwDiamondHalfWidth!-90:(er)$) {};
    \draw[pwgray] (er) -- (rleft) -- (ar) -- (rright) -- (er);
    \draw[pwgray] (rleft) -- (rright);

    \coordinate (bmid) at ($(ab)!0.5!(eb)$);
    \node[pw x] (bup) at ($(bmid)!\pwDiamondHalfWidth!90:(eb)$) {};
    \node[pw z] (bdown) at ($(bmid)!\pwDiamondHalfWidth!-90:(eb)$) {};
    \draw[pwgray] (eb) -- (bup) -- (ab) -- (bdown) -- (eb);
    \draw[pwgray] (bup) -- (bdown);

    \coordinate (ymid) at ($(ay)!0.5!(ey)$);
    \node[pw x] (yup) at ($(ymid)!\pwDiamondHalfWidth!90:(ey)$) {};
    \node[pw y] (ydown) at ($(ymid)!\pwDiamondHalfWidth!-90:(ey)$) {};
    \draw[pwgray] (ey) -- (yup) -- (ay) -- (ydown) -- (ey);
    \draw[pwgray] (yup) -- (ydown);

    \coordinate (elementside) at ($(er)!0.67!(rmid)$);
    \coordinate (apexside) at ($(ar)!0.67!(rmid)$);

    \node[align=center,text=pwgray] (elementlabel) at ($(er)+(-2.3,0.3)$)
      {element\\vertex};
    \draw[pwgray] (elementlabel.east) -- (er);
    \node[align=center,text=pwgray] (middlelabel) at ($(rmid)+(-2.8,0.295)$)
      {middle\\vertex};
    \draw[pwgray] (middlelabel.east) -- (rleft);
    \node[align=center,text=pwgray] (apexlabel) at ($(ar)+(-2.25,0.05)$)
      {apex\\vertex};
    \draw[pwgray] (apexlabel.east) -- (ar);

    \node[align=center,text=pwgray] (elementtrilabel) at
      ($(elementside)+(2.35,0.29)$)
      {element-side\\triangle};
    \draw[pwgray] (elementtrilabel.west) -- (elementside);
    \node[align=center,text=pwgray] (apextrilabel) at
      ($(apexside)+(2.35,-0.14)$)
      {apex-side\\triangle};
    \draw[pwgray] (apextrilabel.west) -- (apexside);
    \node[align=center,text=pwgray] (centrallabel) at
      ($(central)+(3.05,0.45)$)
      {central\\triangle};
    \draw[pwgray] (centrallabel.west) -- (central);
  \end{tikzpicture}%
}

\title{Dense Pinwheel Packing Is Strongly NP-Complete}

\author{
  Yusuke Kobayashi%
  \thanks{Research Institute for Mathematical Sciences, Kyoto University,
  Kyoto, Japan. Email: \texttt{yusuke@kurims.kyoto-u.ac.jp}}
  \and
  Bingkai Lin%
  \thanks{State Key Laboratory for Novel Software Technology,
  Nanjing University, Nanjing, China.
  Email: \texttt{lin@nju.edu.cn}}
  \and
  Joseph Swernofsky%
  \thanks{Department of Electrical Engineering,
  National Taiwan University, Taipei, Taiwan.
  Email: \texttt{d14921b19@ntu.edu.tw}}
}

\date{}

\begin{document}
\maketitle

\begin{abstract}
  An instance of {\sc Pinwheel Packing} is a list of positive integers
  $a_1,\ldots,a_k$.  A feasible schedule assigns one task to every integer
  time so that every interval of $a_i$ consecutive times contains task $i$.
  The instance is \emph{dense} when $\sum_i1/a_i=1$.

  We prove that {\sc Dense Pinwheel Packing} is NP-complete even when every period
  is encoded in unary and equal periods are listed as distinct tasks.
  Consequently, the usual binary-encoded problem is strongly NP-complete.
  Kleinberg and Mishra also prove NP-completeness
  \cite[Corollary~5.1]{KleinbergMishra2026}, but their reduction uses periods
  of exponential numerical size and therefore yields only weak NP-hardness.
  Our proof uses a direct reduction from triangle partition in a sparse
  tripartite graph.  If each of the three parts of the source graph has $n$
  vertices, the reduction produces $O(n^4\log^3 n)$ explicitly listed tasks,
  each with period $O(n^4\log^3 n)$; consequently, its full unary encoding
  has length $O(n^8\log^6 n)$.
\end{abstract}

\noindent\textbf{Keywords:}
pinwheel scheduling, exact covering systems, periodic scheduling,
NP-completeness, triangle partition.

\section{Introduction}
The \PP
problem was introduced in~\cite{HolteEtAl1989Pinwheel}.  Given recurrent tasks with periods $a_1,\ldots,a_k$, the problem
asks whether the tasks can be scheduled on a discrete timeline, with at most
one task executed in each slot, so that every interval of $a_i$ consecutive
slots contains an execution of task $i$.  These recurrence requirements imply
the necessary density bound
\[
  \sum_{i=1}^k \frac{1}{a_i}\le 1.
\]
Instances attaining equality are called \emph{dense}.  At density one,
feasibility is particularly rigid: it is equivalent to choosing one
congruence class modulo $a_i$ for each task so that the chosen classes are
pairwise disjoint, in which case they necessarily partition the timeline.

The representation of the period list is important for the complexity of the
problem.  We use the explicit-list formulation: entries at different
positions represent distinct tasks even when their periods are equal, and
repeated periods are not encoded by compressed multiplicities.  Unary and
binary encoding refer only to the representation of the individual periods.

Mok, Rosier, Tulchinsky, and Varvel proved that an exact variant of \PP is NP-complete even when the
periods are encoded in unary \cite[Theorem~1]{MokEtAl1989}.  
In this variant, each task $i$ is executed exactly once in every interval of $a_i$ consecutive slots. 
This result does not settle the dense problem, because exact packing instances may not be dense.  At density one, however, the packing and covering
formulations, together with their exact variants, all coincide
\cite[Sections~4--5]{KawSurvey}.  In this setting, Kawamura and Soejima
formulated the equivalent Disjoint Covering System Problem and conjectured
that it is NP-complete even when the moduli are encoded in unary
\cite[Conjecture~19]{KawamuraSoejima2020}.

For the explicit-list problem with binary-encoded periods, Jacobs and Longo
obtained an earlier conditional lower bound \cite{JL14}.  This was
subsequently strengthened by Kobayashi, Lin, and Swernofsky, who proved that
a polynomial-time algorithm for \DPP would imply
\[
  \mathrm{NP}\subseteq
  \mathrm{DTIME}\!\left(N^{O(\log\log N)}\right),
\]
where $N$ denotes the input length
\cite[Theorem~1.2]{KobayashiLinSwernofsky2026}.  This implication provides
strong conditional evidence against a polynomial-time algorithm, but it is
not itself an NP-hardness result.

Kleinberg and Mishra recently proved NP-hardness of \DPP{}
\cite[Corollary~5.1]{KleinbergMishra2026}.  Their construction takes
cumulative products of a linear number of polynomially bounded factors,
producing periods with polynomial-length binary encodings but exponentially
large numerical values.  Consequently, the reduction has polynomial output
length under binary encoding but not under unary encoding.  It therefore
establishes only weak NP-hardness and does not prove hardness when the periods
are unary-encoded or polynomially bounded
\cite[discussion following Lemma~4.14]{KleinbergMishra2026}.

In contrast, from a source graph with $n$ vertices in each tripartition
class, this paper gives a reduction that produces $O(n^4\log^3 n)$ explicitly listed tasks, each
with period $O(n^4\log^3 n)$.  The complete unary encoding of the resulting
instance has length $O(n^8\log^6 n)$.  Hence the reduction remains
polynomial-time under unary encoding and establishes strong NP-hardness.

\begin{theorem}[Main theorem]\label{thm:main}
  \DPP{} is NP-complete even when all explicitly listed periods are encoded
  in unary.  Consequently, its usual binary-encoded version is strongly
  NP-complete.
\end{theorem}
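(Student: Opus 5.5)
Membership in NP is routine. At density one, a feasible schedule is the same as a choice of residues $r_i \bmod a_i$ whose congruence classes are pairwise disjoint. Two classes $r_i \bmod a_i$ and $r_j \bmod a_j$ meet exactly when $r_i\equiv r_j \pmod{\gcd(a_i,a_j)}$. So a certificate consists of the residues, and it is checked with $O(k^2)$ gcd computations. Density one then guarantees that disjoint classes cover $\Z$. Because the periods are unary-encoded, the residues have polynomial size.

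For hardness I reduce from \TP. An instance is a tripartite graph with parts $X,Y,Z$, each of size $n$, and bounded degree; the question is whether its vertex set can be partitioned into triangles. I would first show this problem is NP-complete via the diamond and triple gadgets pictured earlier. Starting from 3-dimensional matching, each element becomes an element vertex. Each triple becomes a central triangle on apex vertices $a_r,a_b,a_y$, joined to its three elements by diamonds. A diamond is covered either by an element-side triangle or by an apex-side triangle. This forces each triple to be either entirely "chosen" (element-side triangles everywhere) or entirely "unchosen" (central triangle plus apex-side triangles). Sparsity is what later keeps the number of primes and moduli small.

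The core construction works as follows. Assign to each vertex a pinwheel-task "window," and to each potential triangle $T$ a distinct prime $p_T$ of size $O(\log n)$ or $\mathrm{poly}\log n$. Sparsity ensures there are few triangles, so the primes stay polylogarithmic. For a vertex $u$, let $m_u$ be the product of the primes of triangles containing $u$. I would then design, for each triangle $T=\{x,y,z\}$, a local quotient structure on $\Z/p_T$: the three "special columns" $c_x,c_y,c_z$ plus $p-3$ remaining columns, as in the local quotient diagram. Filler tasks with periods that are multiples of $p_T$ tile everything except a pattern whose completion requires $x,y,z$ to choose residues consistent on $T$. The two collision diagrams express the key dichotomy. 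If $\gcd(m_u,m_v)=1$, the classes of $u$ and $v$ always meet by CRT, so the construction must use common primes to separate vertices. If $p_T\mid\gcd(m_u,m_v)$, choosing both $u$ and $v$ on the diagonal of $T$ avoids the collision. The intended invariant is that each vertex $u$ "selects" exactly one triangle $T\ni u$ through its residue modulo $m_u$. Tasks of $u$ and $v$ are disjoint iff they either select triangles that differ at a shared prime coordinate or select the same triangle and occupy distinct special columns. Filler tasks then cover exactly the complement of selected triangles' slots. The total density must equal one, and I would arrange the filler periods to make it so.

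Correctness then has two directions. Given a triangle partition, I build the residues explicitly and verify that they are disjoint, which implies they tile. Conversely, from a dense schedule I read off each vertex's selected triangle. I use the local quotient modulo $p_T$ with a counting/density argument to show that if one vertex selects $T$, all three do. The selected triangles therefore partition the vertex set. Finally I bound sizes: the periods are products of a constant number of polylog primes times $\mathrm{poly}(n)$ factors, giving $O(n^4\log^3 n)$. The main obstacle is the converse direction, ruling out "cheating" schedules in which residues mix partial selections across different primes. Here I would try a density-counting argument modulo each $p_T$. The filler tasks restricted to a residue class mod $p_T$ have fixed density, so the leftover fraction forces the special columns to be occupied by exactly $x,y,z$ together or not at all.
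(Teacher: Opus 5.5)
Your proposal has a genuine gap in the soundness direction. The gap is a missing ingredient, not a missing computation.

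In your framework, vertices are separated only through shared marker primes. But two distinct vertices of the same part, and more generally any two vertices lying in no common triangle, have coprime marker products. By the dichotomy you quote, their classes then always meet. Your stated invariant (tasks are disjoint iff they select triangles differing at a shared prime coordinate, or the same triangle in distinct special columns) makes such pairs collide unconditionally. So, as described, even YES-instances would be infeasible.

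The paper's fix is a common \emph{color} modulus. Every period is a multiple of $n$, and all markers satisfy $p_T>\max\{3,n\}$, so $\gcd(n,m_v)=1$. Each vertex $v$ gets one witness of period $nm_v$, and its color is the witness residue modulo $n$. A vertex's triangle is thus encoded by its residue modulo $n$, not modulo $m_u$.

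Soundness then needs no counting and uses only witnesses:
\begin{itemize}
\item Two vertices of one part have $\gcd(nm_u,nm_v)=n$. By the CRT intersection criterion they must receive distinct colors, so the color map is a bijection on each part.
\item Two vertices of the same color must satisfy $\gcd(m_u,m_v)>1$, i.e.\ lie in a common triangle. Hence each color class is a triangle.
\end{itemize}
The ``density-counting argument modulo each $p_T$'' that you flag as the main obstacle is never specified. With the color coordinate it becomes unnecessary, because filler tasks cannot create a spurious solution.

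Several other specifics are off.
\begin{itemize}
\item The markers cannot be polylogarithmic. A valid source graph can have up to $3n$ triangles, each needing its own prime, so the markers are of order $n\log n$; they must exceed $n$ in any case. The bound $O(n^4\log^3 n)$ is the largest period $3n\,m_v$ with $m_v$ a product of at most three such primes, which is inconsistent with your choice of primes.
\item The filler must be specified for density one to hold exactly. The paper takes $m_v-3$ cells of period $3nm_v$, giving each vertex block density $1/(3n)$.
\item The local completeness construction lives modulo $3p$, on the grid $\{0,1,2\}\times\{0,\ldots,p-1\}$, not on $\Z/p_T$. There $R_i$ is column $c_i$ plus row $c_i$ in the $p-3$ remaining columns. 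The witness of $i$ gets local class $[c_i]_{m_i}$, and the cells take the remaining $m_i-3$ preimages of $R_i$ modulo $3m_i$.
\item Your gadget modes are swapped. The selected mode is the central triangle together with the three element-side triangles. The unselected mode is the three apex-side triangles alone. ``Central plus apex-side'' would cover every apex vertex twice.
\end{itemize}
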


The reduction starts from a tripartite graph with $n$ vertices in
each part and at most three triangles through each vertex.  We give each graph
triangle $T$ a distinct prime $p_T$, called its \emph{marker}, and let $m_v$
be the product of the markers of the triangles containing vertex $v$.  For each
vertex $v$, the target instance contains one \emph{witness} task of period
$nm_v$ and $m_v-3$ \emph{cell} tasks of period $3nm_v$.

The residue of a witness class modulo $n$ is the vertex's \emph{color}.
Witnesses force the three vertices of each color to form a triangle.
The cell tasks fill the remaining residue classes by a direct local
construction.  Because each marker product has at most three prime factors,
all task multiplicities and periods are polynomial in the source size.  We
state the restricted source result in \cref{thm:source} and prove it in
\cref{app:source}.

\paragraph{Organization.}
\Cref{sec:model} defines \DPP{} and proves residue-class rigidity.
\Cref{sec:source-reduction} defines \TP{} and our reduction.
\Cref{sec:soundness,sec:completeness} prove that this reduction is sound and complete.
\cref{sec:size} proves that the resulting instance is polynomial in size.  Finally, in \cref{app:source} we prove that \TP{} is NP-complete.

\section{The language and dense rigidity}\label{sec:model}


\subsection{Instances and feasibility}
An instance of \PP is a nonempty sequence $A=(a_1,\ldots,a_k)$ of positive integers.
Different list positions are distinct tasks, even when their periods are
equal; no multiplicity notation is allowed.  We use the unary encoding
\[
  \langle A\rangle_{\mathrm{un}}
    =1^{a_1}0\,1^{a_2}0\cdots 1^{a_k}0,
\]
whose length is $k+\sum_i a_i$.  The usual binary representation encodes the
same explicit list with the periods written in binary.

\begin{definition}[Pinwheel packing]
  A packing of $A$ is a map $\sigma:\Z\to[k]$.  It is \emph{feasible} if, for every
  task $i$ and every $s\in\Z$, there exists
  $t\in\{s,s+1,\ldots,s+a_i-1\}$ such that $\sigma(t)=i$.
\end{definition}

Thus every slot contains exactly one task.  Allowing idle slots does not
change feasibility, because filling an idle slot by any task cannot destroy a
window condition.  Define
\[
  \dens(A)=\sum_{i=1}^k\frac1{a_i}.
\]
The decision problem \DPP{} asks whether $\dens(A)=1$ and $A$ admits a
feasible packing.

The density test is exact and polynomial-time.  If $P=\prod_i a_i$, then
$\dens(A)=1$ if and only if $\sum_i P/a_i=P$.  Moreover, $P$ has
$O(\sum_i\log(a_i+1))$ bits, which is polynomial in the unary input length.

\subsection{Rigidity at density one}

For integers $r$ and $a>0$, write $[r]_a=r+a\Z$.

\begin{lemma}[Dense rigidity]\label{lem:rigidity}
  Let $A=(a_1,\ldots,a_k)$ satisfy $\dens(A)=1$.  Then $A$
  is feasible if and only if there are residues $0\le r_i<a_i$ such that the
  classes $[r_i]_{a_i}$ are pairwise disjoint.  In that event the classes
  automatically partition $\Z$.
\end{lemma}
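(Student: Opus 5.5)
The plan is to prove the two directions separately, with a counting argument over one common period doing the work in each.

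\emph{Sufficiency.} Suppose residues $r_i$ are given with the classes $[r_i]_{a_i}$ pairwise disjoint. Let $L=\operatorname{lcm}(a_1,\ldots,a_k)$. Within any window $\{s,\ldots,s+L-1\}$ the class $[r_i]_{a_i}$ has exactly $L/a_i$ elements. Since the classes are disjoint and $\dens(A)=1$, their union meets the window in exactly $\sum_i L/a_i=L$ points. So the union covers every window of length $L$, and hence covers $\Z$; this gives the partition claim. Now define $\sigma(t)=i$ for $t\in[r_i]_{a_i}$. This map is well defined and total. Every interval of $a_i$ consecutive integers contains exactly one element of $[r_i]_{a_i}$, so $\sigma$ is feasible.

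\emph{Necessity.} Suppose $\sigma$ is feasible, and let $S_i=\sigma^{-1}(i)$. The window condition says that consecutive elements of $S_i$ differ by at most $a_i$. Moreover $S_i$ is unbounded in both directions, since every window of length $a_i$ meets it. Counting gives the following. For a long interval $I$ of length $N$, we have $|S_i\cap I|\ge\lfloor N/a_i\rfloor$, and the sets $S_i$ partition $I$. Hence
\[
  N=\sum_i |S_i\cap I|\ge \sum_i \lfloor N/a_i\rfloor .
\]
Taking $N$ to be a multiple of $L$, this becomes $N\ge N\cdot\dens(A)=N$, so equality holds for every $i$: $|S_i\cap I|=N/a_i$ for every interval $I$ of length a multiple of $L$.

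The main obstacle is to upgrade this exact count to exact periodicity, that is, to show all gaps in $S_i$ equal $a_i$. Suppose instead that some gap between consecutive elements of $S_i$ is $g<a_i$. I would take an interval $I$ of length $N=ML$, with $M$ large, starting just after the earlier of these two points. Cover $I$ greedily: from each element of $S_i$ in $I$, the next one lies within $a_i$, and one step has length only $g<a_i$. This shows $S_i\cap I$ has at least $N/a_i+1$ elements, up to boundary effects. If boundary effects are a problem, one can instead use the elementary bound that an interval of length $N$ contains at least $\lceil (N-a_i+1)/a_i\rceil$ points of $S_i$, and partition $I$ into the segment around the short gap plus blocks of length $a_i$. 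Either way we get $|S_i\cap I|\ge N/a_i+1$ for $I$ of length $N$ divisible by $L$. Combined with the lower bounds $|S_j\cap I|\ge N/a_j$ for the other tasks, this gives $\sum_j |S_j\cap I|>N$, a contradiction.

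Hence every gap of $S_i$ equals $a_i$, so $S_i=[r_i]_{a_i}$ for a suitable $0\le r_i<a_i$. The sets $S_i$ are disjoint because $\sigma$ is a function. This yields the required disjoint residue classes, which partition $\Z$ by the sufficiency argument.
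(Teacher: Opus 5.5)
\emph{Comparison with the paper.} The paper does not prove this lemma itself; it only cites it as the standard density-one equivalence \cite[Section~5]{KawSurvey}. Your proposal is a correct, self-contained proof of that fact.

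Sufficiency is complete. Counting $L/a_i$ points of each class in every window of length $L=\operatorname{lcm}(a_1,\ldots,a_k)$ shows the disjoint classes cover $\Z$, and $\sigma$ can then be read off. In the necessity direction, the exact count $|S_i\cap I|=N/a_i$, for every interval $I$ whose length $N$ is a multiple of $L$, is the right lever. A nice feature is that your counting works for an arbitrary $\sigma:\Z\to[k]$, so you never first replace $\sigma$ by a periodic schedule. The paper's citation, by contrast, simply keeps the exposition short.

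\emph{A slip to fix.} As written, the interval $I$ starts \emph{just after} the earlier endpoint $s$ of the short gap, and then the claimed bound fails. With $I=\{s+1,\ldots,s+N\}$, the greedy argument only guarantees $N/a_i$ points. For example, take $a_i=3$ and suppose the points of $S_i$ from $0$ onward are $0,1,4,7,\ldots$, so the gaps are $1,3,3,\ldots$. Then $I=\{1,\ldots,3K\}$ contains exactly $K=N/a_i$ of them, and no contradiction arises.

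The fix is to take $I=\{s,\ldots,s+N-1\}$. The first block $\{s,\ldots,s+a_i-1\}$ contains both $s$ and $s+g$, because $g\le a_i-1$. Each of the remaining $N/a_i-1$ blocks of length $a_i$ contains at least one point of $S_i$. Hence $|S_i\cap I|\ge N/a_i+1$, which contradicts the exact count directly. This already works for $N=L$, so $M$ need not be large.

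With this choice of $I$, your fallback (``the segment around the short gap plus blocks of length $a_i$'') is exactly this argument. The rest of the proof then goes through: all gaps equal $a_i$, so $S_i$ is a full residue class, and the $S_i$ are disjoint because $\sigma$ is a function.
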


This is the standard density-one equivalence between pinwheel packing and its
exact residue-class formulation; see \cite[Section~5]{KawSurvey}.

We will repeatedly use the generalized Chinese remainder theorem in the
following form:
\begin{equation}\label{eq:crt-intersection}
  [r]_a\cap[s]_b\ne\varnothing
  \quad\Longleftrightarrow\quad
  r\equiv s\pmod{\gcd(a,b)}.
\end{equation}

The residue-class characterization also shows that \DPP{} belongs to
$\mathrm{NP}$ under either representation: guess the residues $r_i$ in
binary, check the density, and use \cref{eq:crt-intersection} to test pairwise
disjointness.

\section{Source problem and reduction}\label{sec:source-reduction}

\subsection{The source problem}

\begin{definition}
An instance of \TP{} consists of a tripartite graph
\[
  G=(X\mathbin{\dot\cup}Y\mathbin{\dot\cup}Z,E),
  \qquad |X|=|Y|=|Z|=n\ge1,
\]
where $E$ is an undirected edge set satisfying
\[
  E\subseteq
  \bigl\{\{x,y\}:x\in X,\ y\in Y\bigr\}
  \cup
  \bigl\{\{x,z\}:x\in X,\ z\in Z\bigr\}
  \cup
  \bigl\{\{y,z\}:y\in Y,\ z\in Z\bigr\}.
\]
A triangle is a triple
$\{x,y,z\} \subseteq X\cup Y\cup Z$ such that
$\{x,y\},\{x,z\},\{y,z\}\in E$.
We require $G$ to have maximum degree at most six and require each vertex
to belong to at least one and at most three triangles.  The graph $G$ is in
\TP{} if and only if its $3n$ vertices can be partitioned into $n$ triangles.
\end{definition}

Counting vertex--triangle incidences shows that every valid source graph has
at most $3n$ triangles: its $3n$ vertices supply at most $9n$
incidences, and every triangle uses three.

\begin{theorem}[Restricted triangle partition]\label{thm:source}
  \TP{} is NP-complete under polynomial-time many-one reductions.
\end{theorem}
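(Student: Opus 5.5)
Membership in NP is immediate: a certificate is a list of $n$ vertex-disjoint triples, and checking that each is a triangle takes polynomial time. For hardness we reduce from \textsc{3-Dimensional Matching} restricted so that every element occurs in at most two or at most three triples; this restriction is NP-complete (Garey--Johnson, via the standard reduction from 3SAT). Let the input be disjoint sets $R,B,Y'$ of size $q$ each and a triple set $\mathcal{T}\subseteq R\times B\times Y'$. We may assume every element lies in at least one triple, since otherwise we output a fixed no-instance. The pictures drawn earlier in the paper already give the gadget. Each triple $\tau=(r,b,y)$ gets a central triangle on fresh apex vertices $a_r\in X$, $a_b\in Y$, $a_y\in Z$. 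Each of the three apex--element pairs gets a diamond: for $(e,a_{\tau,e})$ both in class $X$, we add two fresh middle vertices $u_{\tau,e},v_{\tau,e}$ in the other two classes, together with the edges that create the triangles $\{e,u,v\}$ and $\{a_{\tau,e},u,v\}$. The element vertices $r,b,y$ lie in $X,Y,Z$ respectively. The diamonds for $b$ and $y$ are built the same way with the classes permuted.

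The next step is to list all triangles and verify the degree and triangle bounds. Since all vertices are fresh except the element vertices, the only edges are those of the central triangles and of the diamonds. The triangles are therefore exactly the central triangles, the element-side diamond triangles, and the apex-side diamond triangles. One must check that no spurious triangle exists, for instance through $e$, two middle vertices of different diamonds, or an apex. Any two middle vertices of different diamonds are nonadjacent, and $e$ is not adjacent to apexes, so this holds. Counts:
\begin{itemize}
\item a middle vertex has degree $3$ and lies in $2$ triangles;
\item an apex has degree $4$ and lies in $2$ triangles (central and apex-side);
\item an element vertex of occurrence $d\le3$ has degree $2d\le6$ and lies in $d$ triangles, with $d\ge1$.
\end{itemize}
Each class must have equal size. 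Per triple, every class gains one apex, and the three diamonds contribute two middle vertices each, one to each of two classes; hence each class gains exactly $3$ vertices per triple, and the element vertices contribute $q$ per class. So $|X|=|Y|=|Z|=q+3|\mathcal{T}|$, and the graph is tripartite by construction.

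For correctness, observe that in any triangle partition the two middle vertices of each diamond must be covered together. They are covered either by the element-side triangle (using $e$) or by the apex-side triangle (using the apex). Fix a triple $\tau$. If its central triangle is used, then all three diamonds must take their element-side triangles. If the central triangle is not used, each apex must be covered by its apex-side triangle, so all three diamonds take their apex-side triangles. In either case the choice is uniform within $\tau$. Call $\tau$ \emph{selected} when its diamonds take the element-side triangles. Each element vertex is then covered exactly once, and only by a selected triple containing it, so the selected triples form a perfect matching. Conversely, a perfect matching determines a partition by this rule. The main obstacle is the case analysis showing that apexes and middle vertices admit no other coverings; this follows from the triangle list above, which is why ruling out spurious triangles is the crucial check. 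The construction is clearly polynomial.
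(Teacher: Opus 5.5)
Your proof is correct and follows the paper's argument essentially verbatim. It uses the same central-triangle-plus-three-diamonds gadget, the same degree, balance and triangle-enumeration checks, and the same selected/unselected two-mode analysis to get the equivalence with perfect 3-dimensional matching.

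The only difference is the source problem. You reduce from 3DM with at most three occurrences per element, sending instances with a zero-occurrence element to a fixed no-instance, whereas the paper cites Dyer--Frieze's $(2,3)$-restricted version directly. Two small citation fixes are needed:
\begin{itemize}
\item Drop the phrase ``at most two''. 3DM with at most two occurrences per element is polynomial-time solvable; only the at-most-three restriction is NP-complete.
\item Cite the at-most-three result as a known theorem rather than as coming from ``the standard reduction from 3SAT''. That textbook reduction's garbage-collection elements occur in unboundedly many triples.
\end{itemize}
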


We give a self-contained reduction from bounded-occurrence 3DM in \cref{app:source}.

\subsection{The reduction}

Given a source instance $G$ of \TP, enumerate the set $\mathcal T$ of all triangles
and assign successive primes $p_T>\max\{3,n\}$.
Call $p_T$ the \emph{marker} of $T$.
For each vertex $v$, define its \emph{marker product}
\begin{equation}\label{eq:marker-product}
  m_v := \prod_{T\in\mathcal T:\,v\in T}p_T.
\end{equation}
Let $A(G)$ be the explicit target instance of \DPP containing, for every vertex $v$, the
following tasks:
\begin{equation}\label{eq:block}
  \underbrace{1\text{ task of period }nm_v}_{\text{witness of }v}
  \quad\text{and}\quad
  \underbrace{m_v-3\text{ tasks of period }3nm_v}_{\text{cells of }v}.
\end{equation}
Each $m_v$ contains at least one marker prime.  Every marker
prime exceeds three, so $m_v\ge5$ and hence $m_v-3\ge2$.

Formally, $A(G)$ is only an indexed list of integer periods.  The terms
\emph{witness} and \emph{cell} are names used in the proof; they are not
additional task types in the target instance.

\begin{proposition}[Exact density]\label{prop:density}
  The instance $A(G)$ has density exactly one.
\end{proposition}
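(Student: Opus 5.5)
The plan is to compute the density vertex by vertex and then sum. For a fixed vertex $v$, the block in \cref{eq:block} contributes
\[
  \frac{1}{nm_v}+\frac{m_v-3}{3nm_v}
  =\frac{3+(m_v-3)}{3nm_v}
  =\frac{m_v}{3nm_v}
  =\frac{1}{3n}.
\]
This contribution is independent of $v$ and of the marker product $m_v$. The point of the reduction is exactly that the witness and the cells together consume a $1/(3n)$ share of the timeline.

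For this count to be meaningful, the block must actually exist as a list. That requires $m_v-3\ge0$, which was already noted after \cref{eq:block}: each vertex lies in at least one triangle, every marker prime exceeds $3$, and so $m_v\ge5$.

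Summing over all $3n$ vertices of $X\mathbin{\dot\cup}Y\mathbin{\dot\cup}Z$ gives $\dens(A(G))=3n\cdot\frac{1}{3n}=1$. The only thing to confirm is that $A(G)$ has no tasks beyond these blocks, which holds by definition.

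There is no real obstacle here; the only care point is the nonnegativity of $m_v-3$, which has been handled above.
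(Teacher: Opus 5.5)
Your proposal is correct and is essentially the paper's proof: each vertex block contributes $\frac{1}{nm_v}+\frac{m_v-3}{3nm_v}=\frac{1}{3n}$, and summing over the $3n$ vertex blocks gives density exactly one. Your remark that $m_v\ge 5$ (so the block is a well-defined list) matches the observation the paper makes right after the construction.
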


\begin{proof}
  The reciprocal weight of the block for $v$ is
  \[
    \frac1{nm_v}+\frac{m_v-3}{3nm_v}=\frac1{3n}.
  \]
  There are $3n$ vertex blocks, so the total density is exactly one.
\end{proof}

\section{Soundness}\label{sec:soundness}

Fix a source graph $G$, and let $A(G)$ be the explicit target instance from
\cref{eq:block}.  Suppose that $A(G)$ is feasible.
By \cref{lem:rigidity}, every task occupies one full residue class
and all those classes are disjoint.  Let $[r_v]_{nm_v}$ be the class occupied by the witness of vertex $v$, and define its
\emph{color} $c(v)$ as the integer in $\{0, 1, \dots , n-1\}$ such that 
\[
c(v) \equiv r_v \pmod n.
\]  
All marker primes exceed $n$, so $\gcd(n,m_v)=1$ for every $v$.  One basic property of marker products is
\begin{equation}\label{eq:marker-gcd}
  \gcd(m_u,m_v)>1
  \quad\Longleftrightarrow\quad
  \text{some triangle contains both $u$ and $v$}.
\end{equation}
Indeed, every triangle $T$ has its own prime $p_T$, and $p_T$ divides
$m_v$ exactly when $v\in T$.

\begin{lemma}[Every part uses every color]\label{lem:part-colors}
  The map $c$ is a bijection from each of $X,Y,Z$ to $\{0, 1, \dots , n-1\}$.
\end{lemma}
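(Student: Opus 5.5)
The plan is to show that $c$ is injective on each part. Since $|X|=|Y|=|Z|=n$ and the codomain $\{0,1,\dots,n-1\}$ also has $n$ elements, injectivity then gives bijectivity. By symmetry I treat only $X$; the same argument applies verbatim to $Y$ and $Z$.

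\textbf{Step 1: two vertices of $X$ have coprime marker products.} Take distinct $u,v\in X$. Every edge of $G$ joins two different parts, so $u$ and $v$ are not adjacent. A triangle needs all three of its pairs to be edges, so no triangle contains both $u$ and $v$. By \cref{eq:marker-gcd}, this gives $\gcd(m_u,m_v)=1$.

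\textbf{Step 2: compute the gcd of the witness periods.} The witness periods are $nm_u$ and $nm_v$. Step 1 gives $\gcd(m_u,m_v)=1$, so $\gcd(nm_u,nm_v)=n\gcd(m_u,m_v)=n$.

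\textbf{Step 3: apply disjointness and the Chinese remainder theorem.} We are assuming $A(G)$ is feasible, so by \cref{lem:rigidity} the witness classes $[r_u]_{nm_u}$ and $[r_v]_{nm_v}$ are disjoint. By \cref{eq:crt-intersection} and Step 2, disjointness is equivalent to $r_u\not\equiv r_v\pmod n$. Hence $c(u)\ne c(v)$, so $c$ is injective on $X$, and therefore bijective.

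I do not expect a real obstacle here. The only point needing care is Step 1, namely that a triangle meets each part in at most one vertex. This follows directly from the definition of the edge set, which has no edges inside a part.
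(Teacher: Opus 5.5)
Your proposal is correct and follows essentially the same route as the paper: distinct vertices in one part share no triangle, so $\gcd(m_u,m_v)=1$ and $\gcd(nm_u,nm_v)=n$; equal colors would then force the witness classes to intersect by \cref{eq:crt-intersection}, which contradicts disjointness. Injectivity plus $|X|=|Y|=|Z|=n$ gives the bijection. Your extra justification of Step 1 (no edges inside a part, so a triangle meets each part at most once) is a welcome detail that the paper leaves implicit.
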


\begin{proof}
  Two distinct vertices $u,v$ in the same part lie in no common graph
  triangle.  Hence no marker prime divides both $m_u$ and $m_v$, so
  $\gcd(m_u,m_v)=1$ and
  \[
    \gcd(nm_u,nm_v)=n.
  \]
  If $c(u)=c(v)$, then
  $r_u\equiv r_v\pmod n$.
  By \cref{eq:crt-intersection}, the two witness classes would intersect.
  This would contradict dense rigidity.  Hence the $n$ vertices in each part
  receive $n$ distinct colors. 
\end{proof}

\begin{lemma}[A color determines a triangle]\label{lem:color-triangle}
  For each $t\in \{0, 1, \dots , n-1\}$, 
  let $x_t,y_t,z_t$ be the unique vertices of
  $X,Y,Z$, respectively, with color $t$.  Then
  $\{x_t,y_t,z_t\}$ forms a triangle.
\end{lemma}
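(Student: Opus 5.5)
The plan is to prove that each of the three pairs among $x_t,y_t,z_t$ is an edge of $G$. Once that is established, $\{x_t,y_t,z_t\}$ is a triangle by definition, since a triangle is just a triple whose three pairs are all edges. In particular, I do not need the three edges to come from one common graph triangle.

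Fix $t$ and take two of the vertices, say $u=x_t$ and $v=y_t$; the pairs $(x_t,z_t)$ and $(y_t,z_t)$ are handled identically. Both witnesses have color $t$, so $r_u\equiv r_v\pmod n$. Suppose, for contradiction, that $\gcd(m_u,m_v)=1$. Then $\gcd(nm_u,nm_v)=n\gcd(m_u,m_v)=n$, exactly as in the proof of \cref{lem:part-colors}. By \cref{eq:crt-intersection}, the witness classes $[r_u]_{nm_u}$ and $[r_v]_{nm_v}$ would then intersect. This contradicts the pairwise disjointness supplied by \cref{lem:rigidity}. Hence $\gcd(m_u,m_v)>1$.

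By \cref{eq:marker-gcd}, some graph triangle $T$ then contains both $u$ and $v$. Every pair of distinct vertices in a triangle is an edge, so $\{u,v\}\in E$. Applying this to all three pairs gives $\{x_t,y_t\},\{x_t,z_t\},\{y_t,z_t\}\in E$, and therefore $\{x_t,y_t,z_t\}$ is a triangle.

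There is no real obstacle in this argument. The one point to state carefully is the contrapositive use of the CRT criterion: equal colors together with coprime marker products force the two witness classes to collide. The well-definedness of $x_t,y_t,z_t$ is already provided by \cref{lem:part-colors}.
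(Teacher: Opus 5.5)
Your proof is correct and follows the paper's argument essentially verbatim. Equal colors together with coprime marker products would force the two witness classes to intersect by \cref{eq:crt-intersection}, so \cref{eq:marker-gcd} gives a common graph triangle and hence an edge for each of the three pairs; the only cosmetic difference is that you assume $\gcd(m_u,m_v)=1$ directly, whereas the paper assumes no triangle contains both vertices, and these are equivalent by \cref{eq:marker-gcd}.
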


\begin{proof}
  We first show that any two vertices of color $t$ share a graph triangle.
  Consider $x_t$ and $y_t$.  Suppose, for contradiction, that no graph
  triangle contains both of them.  By \cref{eq:marker-gcd}, $\gcd(m_{x_t},m_{y_t})=1$ and 
  \[
    \gcd(nm_{x_t},nm_{y_t})=n.
  \]
  By the definition of color,
  $r_{x_t}\equiv r_{y_t}\pmod n$.  
  Then \cref{eq:crt-intersection} implies that the two
  witness classes $[r_{x_t}]_{nm_{x_t}}$ and $[r_{y_t}]_{nm_{y_t}}$
  intersect, contradicting
  the pairwise disjointness of the residue classes in a feasible dense
  instance.  Hence some graph triangle contains both $x_t$ and $y_t$, and in
  particular $\{x_t,y_t\}\in E$.

  Applying the same argument to $\{x_t,z_t\}$ and $\{y_t,z_t\}$ gives the other
  two cross-edges.  Thus $\{x_t,y_t,z_t\}$ is a triangle by the definition of
  \TP{}.
\end{proof}

\begin{proposition}[Soundness]\label{prop:soundness}
  Let $G$ be a valid source graph in which every vertex belongs to a graph
  triangle.  If $A(G)$ is feasible, then $G$ is a YES-instance of \TP{}.
\end{proposition}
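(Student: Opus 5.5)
The plan is to assemble the two preceding lemmas into an explicit triangle partition.

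Assume $A(G)$ is feasible. Since $\dens(A(G))=1$ by \cref{prop:density}, \cref{lem:rigidity} applies. It supplies residues for all tasks whose classes are pairwise disjoint. In particular, each vertex $v$ has a witness class $[r_v]_{nm_v}$, and hence a color $c(v)\in\{0,1,\dots,n-1\}$. This is exactly the setting in which \cref{lem:part-colors,lem:color-triangle} were proved, so both may be invoked.

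First, \cref{lem:part-colors} says that $c$ restricted to each of $X$, $Y$, $Z$ is a bijection onto $\{0,\dots,n-1\}$. So for each color $t$ there are unique vertices $x_t\in X$, $y_t\in Y$, $z_t\in Z$ of color $t$. Next, \cref{lem:color-triangle} says that each triple $\{x_t,y_t,z_t\}$ is a triangle of $G$.

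It remains to check that these $n$ triangles partition $X\cup Y\cup Z$. For disjointness, take $t\ne t'$. The triples $\{x_t,y_t,z_t\}$ and $\{x_{t'},y_{t'},z_{t'}\}$ have different colors, and $c$ is a function on vertices, so no vertex lies in both. For coverage, every vertex $v$ lies in one of $X,Y,Z$ and has some color $t=c(v)$. By uniqueness in the bijection, $v$ equals the corresponding $x_t$, $y_t$ or $z_t$. This is the step that uses surjectivity from \cref{lem:part-colors}.

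Together these facts say that the $3n$ vertices are partitioned into $n$ triangles, so $G$ is a YES-instance of \TP{}.

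The main obstacle would have been excluding collisions of witness classes, but that work is already done in \cref{lem:part-colors,lem:color-triangle}. What remains is bookkeeping, and the proof only needs to confirm that the hypotheses match. These are: the density-one condition, so that \cref{lem:rigidity} applies; all marker primes exceeding $n$, so that $\gcd(n,m_v)=1$; and $G$ being a valid source graph, so that every $m_v$ is defined. The cell tasks play no role in soundness.
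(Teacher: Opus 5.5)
Your proof is correct and follows the paper's approach: apply \cref{lem:color-triangle} to get one triangle $\{x_t,y_t,z_t\}$ for each color $t$, then use the bijectivity from \cref{lem:part-colors} to conclude that these triangles cover each vertex exactly once. Your explicit disjointness and coverage checks just spell out what the paper states in one line.
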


\begin{proof}
  For every $t\in \{0, 1, \dots , n-1\}$, 
  \cref{lem:color-triangle} produces a triangle
  $\{x_t,y_t,z_t\}$.  By \cref{lem:part-colors}, as $t$ varies these triangles
  contain every vertex of each part exactly once.  They are therefore a
  triangle partition of $G$.
\end{proof}

  Notice that this argument uses only witness tasks.  The cells are needed to
  build a packing in the completeness direction, but they cannot create a
  spurious source solution.

\section{Completeness}\label{sec:completeness}

\begin{proposition}[Completeness]\label{prop:completeness}
  If a valid source graph $G$ has a triangle partition, then $A(G)$ is
  feasible.
\end{proposition}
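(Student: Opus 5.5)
Fix a triangle partition $\mathcal P=\{T_0,\ldots,T_{n-1}\}$ of $G$, and give every vertex of $T_t$ the color $t$. I will construct residue classes for all tasks of $A(G)$ that partition $\Z$ and are pairwise disjoint; by \cref{lem:rigidity} this proves feasibility. Every period of $A(G)$ is a multiple of $n$. So it suffices to tile each color class $[t]_n$ separately, using only the tasks of the three vertices $x,y,z$ of $T_t$. Classes of different colors are automatically disjoint, since they differ modulo $n$.

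Inside $[t]_n$, write times as $t+ns$. A class of period $nM$ inside $[t]_n$ then corresponds to a class of period $M$ in the variable $s$. Thus I must tile $\Z$ (in $s$) with the following classes: for each $v\in\{x,y,z\}$, one witness class modulo $m_v$ and $m_v-3$ cell classes modulo $3m_v$.

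Let $p=p_{T_t}$ and $q_v=m_v/p$; the prime $p$ divides all three $m_v$. Since $p>3$, the integers $3$ and $p$ are coprime, so by CRT each $s$ has a well-defined \emph{column} $s\bmod p$ and \emph{row} $s\bmod 3$, as in the local quotient diagram. Assign columns $c_x=0$, $c_y=1$, $c_z=2$, and assign the rows $0,1,2$ of the remaining $p-3$ columns to $x,y,z$ respectively. Every vertex then owns a region of $\Z$ that is a union of classes modulo $3p$, and these regions partition $\Z$. The regions are disjoint because the classes modulo $3p$ differ, and $3p$ divides every period involved. It therefore suffices to tile, for each $v$, its own region using only its own tasks.

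For the vertex $x$ (the others are symmetric), I proceed as follows.
\begin{itemize}
\item The witness takes $s\equiv 0\pmod p$ together with some fixed residue $w$ modulo $q_x$. This is a class modulo $m_x$ inside column $0$, by CRT, since $\gcd(p,q_x)=1$ because $m_v$ is squarefree.
\item The rest of column $0$ splits into the $q_x-1$ other residues modulo $q_x$. Each of these splits further by row into $3$ classes modulo $3m_x$, giving $3(q_x-1)$ cell classes.
\item Each of the $p-3$ pairs (column $j\ge 3$, row $0$) is a class modulo $3p$. Refining by residue modulo $q_x$ gives $q_x$ classes modulo $3m_x$ per pair, so $(p-3)q_x$ cell classes in total.
\end{itemize}
The cell count is $3q_x-3+(p-3)q_x=pq_x-3=m_x-3$, exactly the number of cell tasks of $x$. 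All these classes are distinct and together exactly cover the region of $x$.

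The main point to check is that sharing of extra primes causes no trouble. The vertices $x$ and $y$ may lie in other common triangles, so $\gcd(m_x,m_y)$ may exceed $p$. This is harmless: every disjointness claim above is witnessed already modulo $3p$ (different columns or rows), and $3p$ divides every period. Finally, I verify that the construction consists of residue classes of the prescribed periods (both witness and cell classes have period a multiple of $n$ matching \cref{eq:block}). The classes partition each $[t]_n$, and hence all of $\Z$. By \cref{lem:rigidity}, $A(G)$ is feasible.
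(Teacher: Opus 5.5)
Your proposal is correct and is essentially the paper's construction: the same $3\times p$ grid via $s\mapsto(s\bmod 3,\,s\bmod p)$, the same regions (full column $c_v$ plus row $c_v$ in the remaining $p-3$ columns), the witness class placed in column $c_v$, and the cells filling the rest of the region. Your ``each vertex tiles its own region'' bookkeeping replaces the paper's case-by-case pairwise disjointness check and gives the exact cover directly; the one slip is that $3p$ does not divide the witness period $m_v$ (which is prime to $3$), so pairs involving a witness are separated modulo $p$ (different columns), or simply because the witness class lies in column $c_v$, which you do establish.
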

\begin{proof}
  By \cref{prop:density,lem:rigidity}, it suffices to assign one residue
  class to every task and prove that the assigned classes are pairwise
  disjoint.
  We verify this condition using the given triangle partition.
  Let $\mathcal P=\{T_t:t\in \{0, 1, \dots , n-1 \}\}$ be a triangle partition of $G$.

\paragraph{Reduction to local problem.}
  Recall from \cref{eq:block} that the block of a vertex $v$ consists of one
  witness task of period $nm_v$ and $m_v-3$ cell tasks of period $3nm_v$.
  Thus every task period has the form $nd$.  We call $d$ the
  \emph{local period} of the task.  In particular, the witness of $v$ has
  local period $m_v$, whereas each cell of $v$ has local period $3m_v$.

  Suppose that $v$ belongs to $T_t$.  For a task in the block of $v$ with
  local period $d$, choosing a \emph{local phase}
  $\gamma\in\{0,\ldots,d-1\}$
  means assigning that task the global residue class
  $[t+n\gamma]_{nd}$. We say that its \emph{local class} is $[t]_d$. 
  Thus the witness of $v$ will receive a local phase
  \[
    \alpha_v\in\{0,\ldots,m_v-1\}
  \]
  and the cell tasks will receive local phases
  \[
    \beta_{v,1},\ldots,\beta_{v,m_v-3}
      \in\{0,\ldots,3m_v-1\}.
  \]
  Their global classes are therefore
  \[
    [t+n\alpha_v]_{nm_v}
  \]
  for the witness and
  \[
    [t+n\beta_{v,\ell}]_{3nm_v}
    \qquad (1\le\ell\le m_v-3)
  \]
  for the cells.  We specify these local phases below.

  This form of the assignment already separates tasks whose vertices belong
  to distinct triangles of $\mathcal P$.  Indeed, suppose that tasks of period $dn$ and $en$ are
  associated with $T_s$ and $T_t$, respectively, where
  $s\ne t$.  Note that we chose all task periods to be divisible by $n$.  Their phases are
  congruent to $s$ and $t$, respectively, modulo $n$, and therefore cannot
  be congruent modulo $\gcd(dn, en)$.  By
  \cref{eq:crt-intersection}, the two residue classes are disjoint.

  It remains to separate tasks associated with the same partition triangle
  $T_t$.  Consider two such tasks with local periods $d,e$ and local phases
  $\alpha,\beta$.  Their global periods are $nd,ne$, and their global phases
  are $t+n\alpha,t+n\beta$.  Since $\gcd(nd,ne)=n\gcd(d,e)$,
  we have
  \begin{equation}\label{eq:cancel-outer-residue}
    t+n\alpha\equiv t+n\beta
      \pmod{\gcd(nd,ne)}
    \quad\Longleftrightarrow\quad
    \alpha\equiv\beta\pmod{\gcd(d,e)}.
  \end{equation}
  Consequently, for a fixed $T_t$, it suffices to choose the local phases so
  that any two distinct local classes do not intersect.

\paragraph{Assignment in the local problem.}
  Fix $t\in \{0, 1, \dots , n-1\}$ 
  and write
  \[
    T:=T_t=\{x,y,z\}
    \qquad (x\in X,\ y\in Y,\ z\in Z).
  \]
  Let $p :=p_T$ and let 
  \[
    Q_T:= \{0, 1, 2 \} \times \{0, 1, \dots p-1\}, 
  \]
   which can be represented as the $3\times p$ grid. 
   Define the local labels $c_x=0$, $c_y=1$, and $c_z=2$.
   For each $i\in\{x,y,z\}$, define
   \begin{align*}
     R_i:=
       &\{(q, r)\in Q_T:r = c_i\} \\
       \,\cup\,
      &\{(q, r)\in Q_T:r \ge 3,\
                      q = c_i \}.
  \end{align*}
  Thus $R_i$ consists of the full column $c_i$, together with row $c_i$ in
  every column outside $\{0,1,2\}$.  
  See Figure~\ref{fig:local-grid}.
  One sees that $|R_i|=3+(p-3)=p$ for each $i$, and $Q_T=R_x\mathbin{\dot\cup}R_y\mathbin{\dot\cup}R_z$.

  \begin{figure}
    \centering
    \pwLocalQuotientDiagram
    \caption{The region $R_i$ consists of the full column $c_i$ and row
      $c_i$ in every remaining column.}
    \label{fig:local-grid}
  \end{figure}

  We now choose the local phases for the block of a fixed vertex
  $i\in\{x,y,z\}$.  We define a map 
  \[ 
  \pi_i : \{0, 1, \dots , 3m_i-1\} \to Q_T
  \]
  by $\pi(s) = (q, r)$, where $s \equiv q \pmod 3$ and $s \equiv r \pmod p$. 
  Let $B_i:=\pi_i^{-1}(R_i)$. 
  Since $3p | 3m_i$, every element of $Q_T$ has exactly $\frac{3m_i}{3p}$
  preimages under $\pi_i$.  Therefore, 
  \[
  |B_i|=\frac{3m_i}{3p} |R_i|=m_i.
  \]

  Assign the witness of $i$ the local phase $\alpha_i:=c_i$, that is, the local residue class is $[c_i]_{m_i}$. 
  This class splits as 
  \begin{equation}\label{eq:witness-split}
    [c_i]_{m_i}
      = [c_i]_{3m_i} \mathbin{\dot\cup} [c_i + m_i]_{3m_i} \mathbin{\dot\cup} [c_i + 2 m_i]_{3m_i}.
  \end{equation}
  Let $W_i := \{c_i, c_i + m_i, c_i + 2m_i\}$. 
  Since $p\mid m_i$, all of $\pi(c_i)$, $\pi(c_i + m_i)$, and $\pi(c_i + 2m_i)$ lie in column $c_i$ of $Q_T$, 
  which implies that   $W_i \subseteq B_i$.
  The set $B_i\setminus W_i$ therefore has exactly $m_i-3$ elements.
  Choose the local cell phases bijectively so that
  \[
    \{\beta_{i,1},\ldots,\beta_{i,m_i-3}\}
      =B_i\setminus W_i.
  \]
  Performing this construction for each $i\in\{x,y,z\}$ completes the
  assignment of all task classes associated with $T_t$.

\paragraph{Validity of the assignment.}

  We verify that these local classes are pairwise disjoint.

  First consider two tasks in the same vertex block $i$.  
  Since two distinct cell tasks are assigned distinct integers from $B_i \subseteq \{0, 1, \dots , 3m_i-1\}$, we obtain
   $\beta_{i,\ell}\not\equiv\beta_{i,\ell'} \pmod{3m_i}$
    for $\ell\ne\ell'$, so the classes $[\beta_{i,\ell}]_{3m_i}$ and $[\beta_{i, \ell'}]_{3m_i}$ are disjoint.

  Next compare the witness of $i$ with a cell having local phase
  $\beta_{i,\ell}$.  The set $W_i$ consists precisely of the residues modulo
  $3m_i$ that are congruent to $c_i$ modulo $m_i$.  Since $\beta_{i,\ell}\notin W_i$, 
  we obtain $\beta_{i,\ell}\not\equiv c_i\pmod{m_i}$.
  Therefore, $[\beta_{i,\ell}]_{3m_i}$ and $[c_i]_{m_i}$ are disjoint.
  
  It remains to compare tasks belonging to two distinct vertex blocks $i,j\in\{x,y,z\}$.
  If both tasks are witnesses, their local residue classes are $[c_i]_{m_i}$ and $[c_j]_{m_j}$. 
  Since $p\mid\gcd(m_i,m_j)$ and $p>3$, we obtain 
  \[
  c_i\not\equiv c_j\pmod{\gcd(m_i,m_j)}. 
  \]
  Therefore, $[c_i]_{m_i}$ and $[c_j]_{m_j}$ are disjoint.
  
  Suppose next that one task is the witness of $i$ and the other is a
  cell of $j$ with local phase $\beta_{j,\ell}$.  Since $\pi_j(\beta_{j,\ell})\in R_j$,
  the column containing $\pi_j(\beta_{j,\ell})$ is either $c_j$ or lies outside
  $\{0,1,2\}$.  In either case, it is different from $c_i$.  Thus $\beta_{j,\ell}\not\equiv c_i\pmod p$. 
  Since $p\mid\gcd(m_i,3m_j)$,
  it follows that
  \[
    \beta_{j,\ell}\not\equiv c_i
      \pmod{\gcd(m_i,3m_j)}. 
  \]
  Therefore, $[c_i]_{m_i}$ and $[\beta_{j,\ell}]_{3m_j}$ are disjoint.

  Finally, suppose both tasks are cells, with local phases
  $\beta_{i,\ell}$ and $\beta_{j,\ell'}$.  
  Then, $\pi_i(\beta_{i,\ell}) \in R_i$ and $\pi_j(\beta_{j,\ell'}) \in R_j$.  
  Since $R_i$ and $R_j$ are disjoint, $\beta_{i,\ell}\not\equiv\beta_{j,\ell'}\pmod{3p}$.
  Since $3p\mid\gcd(3m_i,3m_j)$,
  we obtain
  \[
    \beta_{i,\ell}\not\equiv\beta_{j,\ell'}
      \pmod{\gcd(3m_i,3m_j)}.
  \]
  Therefore, $[\beta_{i,\ell}]_{3m_i}$ and $[\beta_{j,\ell'}]_{3m_j}$ are disjoint.

  These cases cover every pair of distinct tasks associated with $T_t$.
  By \cref{eq:cancel-outer-residue}, since their local phases do not overlap, their global phases also do not overlap.  Since $T_t$ was
  arbitrary, this holds within every triangle of $\mathcal P$.  Tasks
  associated with distinct triangles of $\mathcal P$ were already separated
  by their different residues modulo $n$.  Hence all assigned residue
  classes are pairwise disjoint.

  Since $A(G)$ has density one by \cref{prop:density},
  \cref{lem:rigidity} implies that these classes form an exact cover of
  $\Z$ and that $A(G)$ is feasible.
\end{proof}

\section{Explicit size and proof of the main theorem}\label{sec:size}

As noted in the source-problem definition, every valid source graph has at
most $3n$ triangles, and hence $|\mathcal T|\le3n$.

Let $\pi(x)$ denote the number of primes at most $x$, and let $p_j$ denote
the $j$th prime.  Choose the first $|\mathcal T|$ primes larger than
$\max\{3,n\}$.  Their indices in the prime sequence are at most
$\pi(\max\{3,n\})+3n\le6n$.  The standard bound $p_j=O(j\log j)$,
for example from Rosser and Schoenfeld~\cite{RosserSchoenfeld1962}, gives
\begin{equation}\label{eq:prime-bound}
  p_T=O(n\log n).
\end{equation}
Each $m_v$ is the product of the marker primes for the triangles containing
$v$.  There are at most three such primes, so \cref{eq:prime-bound} yields
\[
  m_v=O(n^3\log^3 n).
\]
The reduction explicitly lists $m_v-2$ tasks for $v$: one witness and
$m_v-3$ cells.  Therefore the total number of target tasks is
\begin{equation}\label{eq:task-bound}
  k=\sum_{v\in V(G)}\bigl(1+(m_v-3)\bigr)
   =O(n^4\log^3 n),
\end{equation}
and the largest period is
\[
  3nm_v=O(n^4\log^3 n).
\]
In particular, every period is polynomially bounded in $n$.  These bounds also
show that the explicit task list $A(G)$ can be generated in polynomial time.

\begin{proof}[Proof of \cref{thm:main}]
  \TP{} is NP-hard by \cref{thm:source}.  Given a source instance $G$, the
  construction of $A(G)$ is polynomial-time and has density exactly one by
  \cref{prop:density}.  Moreover,
  \cref{prop:soundness,prop:completeness} show that
  \[
    G\text{ has a triangle partition}
    \quad\Longleftrightarrow\quad
    A(G)\text{ is feasible}.
  \]
  Thus \DPP{} is NP-hard.
  By \cref{eq:task-bound} and the bound on the largest period,
  \[
    \sum_{j=1}^k a_j
      \le k\max_j a_j
      =O(n^8\log^6 n).
  \]
  Hence the output length remains polynomial even when all periods are encoded
  in unary, so the reduction proves NP-hardness for the unary representation.
  Together with the NP-membership above, this proves the theorem.
\end{proof}

\section*{Acknowledgments and use of AI}
The proof was initially found by GPT-5.5. The authors subsequently checked and revised it and take full responsibility for its correctness.
Bingkai Lin thanks Chao Xu for sharing his experience using Codex for
mathematical research in the blog post
\href{https://chaoxu.prof/posts/2026-07-18-ai-agents-for-the-working-mathematician.html}
{\emph{AI Agents for the Working Mathematician}}.

\newpage
\appendix
\section{NP-completeness of Sparse Tripartite Triangle Partition}\label{app:source}

For completeness, we prove \cref{thm:source}. The reduction uses the triangle-partition gadget of Dyer and Frieze~\cite{DyerFrieze1986}.

\begin{definition}[3-Dimensional Matching]\label{def:3dm}
  An instance consists of three pairwise disjoint sets $R,B,Y$ of equal size
  and a set of triples
  \[
    \mathcal S\subseteq R\times B\times Y.
  \]
  The question is whether there exists a collection of pairwise disjoint
  triples from $\mathcal S$ that covers
  $R\mathbin{\dot\cup}B\mathbin{\dot\cup}Y$.

  For an element $e\in R\cup B\cup Y$, define $
    \operatorname{occ}(e)
      :=|\{\tau\in\mathcal S:e\in\tau\}|$.
  We write $(2,3)$-\textsc{3-Dimensional Matching} for the restriction in
  which $\operatorname{occ}(e)\in\{2,3\}$ for every $e\in R\cup B\cup Y.$
\end{definition}

\begin{theorem}[\mbox{Dyer--Frieze~\cite[Theorem 2.3]{DyerFrieze1986}}]\label{thm:bounded-3dm}
  $(2,3)$-\textsc{3-Dimensional Matching} is NP-complete.
\end{theorem}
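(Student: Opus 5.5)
Membership in NP is immediate: a certificate is a subfamily of $\mathcal S$, and we check that its triples are pairwise disjoint and cover $R\mathbin{\dot\cup}B\mathbin{\dot\cup}Y$. For hardness, the plan is to reduce from unrestricted \textsc{3-Dimensional Matching} in three stages. We first bound every occurrence number by $3$, then remove elements of occurrence $0$ or $1$, and finally check that the result is a $(2,3)$-instance that is equivalent to the original and has polynomial size. An alternative, closer to Dyer and Frieze, is to reduce directly from a bounded-occurrence satisfiability problem with variable-cycle gadgets. I would still use the element-splitting route, because its correctness argument is purely combinatorial.

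\textbf{Step 1 (splitting high-occurrence elements).} Let $e\in R$ occur in $\tau_1,\ldots,\tau_k$ with $k\ge4$; the cases $e\in B$ and $e\in Y$ are symmetric. Replace $e$ by fresh copies $e_1,\ldots,e_k\in R$, and replace $e$ by $e_j$ in $\tau_j$. Then add an alternating cycle gadget, with fresh elements in all three classes, with the following property. The gadget admits exactly two internal perfect matchings. One matching covers every copy except $e_1$. The other, or more generally any internal matching, leaves exactly one copy $e_j$ uncovered, and for every $j$ some internal matching leaves $e_j$ uncovered.

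To build it, I would use the classical Garey--Johnson ring. It pairs $e_j$ with a fresh $B$-element $b_j$ and a fresh $Y$-element $y_j$, chains consecutive copies through auxiliary $R$-elements, and adds dummy elements so that the gadget contributes equally many new elements to each class. The design constraint I would verify is that each copy $e_j$ lies in one original triple and at most two gadget triples, and that every auxiliary element lies in exactly two or three triples.

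The key claim is the following. In any perfect matching of the new instance, exactly one copy $e_j$ is covered by its original triple $\tau_j$. Conversely, any choice of a single $j$ extends to a perfect matching of the gadget. Given this claim, perfect matchings correspond exactly, in both directions, between the old instance and the new one. I expect this local exactly-one-out analysis of the gadget to be the main obstacle. Both the gadget and the class balance are easy to get slightly wrong, so I would check the claim by a direct parity argument along the cycle.

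\textbf{Step 2 (cleaning low occurrences).} Repeat Step 1 until every element has occurrence at most $3$. If some element has occurrence $0$, output a fixed NO-instance. If an element has occurrence $1$, its unique triple is forced into every solution. We include that triple, delete its three elements, delete every other triple meeting them, and iterate. Each round strictly shrinks the instance, so the loop ends in polynomially many steps. It ends either with an occurrence-$0$ rejection, with the empty instance (YES), or with all occurrences in $\{2,3\}$. Deleting triples never raises an occurrence number, so the bound $3$ from Step 1 is preserved.

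\textbf{Step 3 (size and equivalence).} Each gadget has size $O(k)$, so the total size is linear in $\sum_e\operatorname{occ}(e)$. The classes stay equal in size because every gadget is class-balanced and every forced triple removes one element from each class. Composing the equivalences from Steps 1 and 2 gives a polynomial-time many-one reduction to $(2,3)$-\textsc{3-Dimensional Matching}, which completes the proof.
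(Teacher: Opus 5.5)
The paper does not prove this theorem; it only cites Dyer and Frieze, who reduce from a planar bounded-occurrence satisfiability problem and also obtain planarity. Your architecture is a sound self-contained alternative for the bounded-occurrence statement, which is all the paper uses: split high-occurrence elements, then propagate forced triples. Steps 2 and 3 are fine. For the fixed NO-instance you can take $\{(r_1,b_1,y_1),(r_1,b_2,y_2),(r_2,b_1,y_2),(r_2,b_2,y_1)\}$, where every element occurs twice and any two triples meet.

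The gap is in Step 1, the only nontrivial step: you never actually give a gadget with the exactly-one-out property. Your specification contradicts itself. A gadget with exactly two internal perfect matchings cannot, for each of the $k\ge4$ indices $j$, have one that leaves $e_j$ uncovered. More seriously, the Garey--Johnson ring you invoke does the wrong thing. Its internal elements can be covered in exactly two ways, leaving free either all odd-indexed or all even-indexed tips. It is therefore a two-state gadget that releases half of its tips. Wired to the copies $e_j$, it would force roughly half of the triples $\tau_j$ into every solution, which breaks the correspondence with the original instance. The underlying issue is counting. The number of copies left for original triples is $k$ minus the number of gadget triples used, so you need a gadget that always uses exactly $k-1$ triples, each covering one copy. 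That is a path, not a ring. As written, your key claim is asserted rather than proved, and the gadget you name would falsify it.

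The repair is short and needs no dummy elements. Add fresh elements $b_1,\dots,b_{k-1}\in B$ and $y_1,\dots,y_{k-1}\in Y$, together with the triples $(e_i,b_i,y_i)$ and $(e_{i+1},b_i,y_i)$ for $1\le i\le k-1$.
\begin{itemize}
\item Every gadget triple contains exactly one $b_i$, and each $b_i$ is covered exactly once. So a perfect matching contains exactly $k-1$ pairwise disjoint gadget triples, covering $k-1$ distinct copies.
\item The remaining copy $e_j$ can then only be covered by $\tau_j$, and no other $\tau_i$ is used.
\item Conversely, for any $j$, use $(e_i,b_i,y_i)$ for $i<j$ and $(e_{i+1},b_i,y_i)$ for $i\ge j$.
\end{itemize}
Each class grows by $k-1$. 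The copies $e_1$ and $e_k$ occur twice, the other copies three times, and every $b_i$ and $y_i$ twice. So one application per element already brings every occurrence number to at most $3$. With this gadget in place, your Steps 2 and 3 complete the proof.
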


Dyer and Frieze prove the stronger statement with an additional planarity
restriction.  We use only the bounded-occurrence consequence stated above.

\begin{proof}[Proof of \cref{thm:source}]
  Membership of \TP{} in $\mathrm{NP}$ is immediate: for a valid graph with
  $n$ vertices in each tripartition class, a list of $n$ vertex-disjoint
  triangles covering all vertices is a polynomially verifiable certificate.

  For NP-hardness, let $\mathcal I=(R,B,Y,\mathcal S)$
  be an instance of $(2,3)$-\textsc{3-Dimensional Matching}, where
  \[
    R=\{r_1,\ldots,r_q\},\qquad
    B=\{b_1,\ldots,b_q\},\qquad
    Y=\{y_1,\ldots,y_q\}.
  \]
  We construct a graph $G_{\mathcal I}$ whose three tripartition classes
  correspond to $R,B,Y$.

  Create one \emph{element vertex} for every member of
  $R\cup B\cup Y$, placing it in the corresponding tripartition class.
  For every source triple
  \[
    \tau=(r,b,y)\in\mathcal S,
  \]
  create three \emph{apex vertices} $a_{\tau,r}$, $a_{\tau,b}$, and $a_{\tau,y}$,
  placing each apex in the same tripartition class as its corresponding
  element.  Join the three apex vertices in a central triangle.

  \begin{figure}[H]
    \centering
    \pwDiamondDiagram
    \caption{One diamond.  The dashed apex--element pair indicates the
      missing edge.  The two displayed triples are the element-side and
      apex-side triangles.}
    \label{fig:diamond}
  \end{figure}

  For every pair $(\tau,e)$ with $e\in\tau$, create two
  \emph{middle vertices} $u_{\tau,e}$ and $v_{\tau,e}$, one in each of the
  two tripartition classes different from that of $e$.  Join
  $u_{\tau,e}$ and $v_{\tau,e}$, and join each of them to both $e$ and
  $a_{\tau,e}$.  These four vertices therefore induce a $K_4$ minus the
  edge $e a_{\tau,e}$; we call this subgraph a \emph{diamond}.

  The diamond contains exactly the two triangles
  \[
    \{e,u_{\tau,e},v_{\tau,e}\}
    \quad\text{and}\quad
    \{a_{\tau,e},u_{\tau,e},v_{\tau,e}\},
  \]
  called respectively its \emph{element-side triangle} and
  \emph{apex-side triangle}.  The diamond and its two possible covering
  triangles are shown in \cref{fig:diamond}.

  The \emph{triple gadget} associated with
  $\tau=(r,b,y)$ consists of its central apex triangle and the three diamonds
  corresponding to $r,b,y$.  Element vertices are shared by the gadgets of
  the source triples containing them, whereas all apex and middle vertices
  are private to their respective gadgets.  There are no other vertices or
  edges.  The resulting gadget is shown in
  \cref{fig:triple-gadget}.

  \begin{figure}
    \centering
    \scalebox{1.35}{\pwTripleGadgetDiagram}
    \caption{The triple gadget for $\tau=(r,b,y)$.  Colors indicate the three
      tripartition classes; the $\tau$ subscripts and middle-vertex labels
      are suppressed.}
    \label{fig:triple-gadget}
  \end{figure}

  We first verify that $G_{\mathcal I}$ is a valid instance of \TP{}.
  The graph is simple and tripartite by construction.  Each middle vertex
  has degree three, each apex vertex has degree four, and an element vertex
  $e$ has degree $2\operatorname{occ}(e)\in\{4,6\}$.
  Hence the maximum degree is at most six.

  The graph is also balanced.  Each source triple contributes one apex and
  two middle vertices to each tripartition class.  Consequently, every class
  contains $n':=q+3|\mathcal S|$
  vertices.

  We next identify all triangles of $G_{\mathcal I}$.  Fix a middle vertex,
  say $u_{\tau,e}$.  Its only neighbors are $v_{\tau,e}$, $e$, and $a_{\tau,e}$.
  The vertex $v_{\tau,e}$ is adjacent to both $e$ and $a_{\tau,e}$, whereas
  $e$ and $a_{\tau,e}$ are not adjacent.  Therefore, the only triangles
  containing $u_{\tau,e}$, and likewise the only triangles containing
  $v_{\tau,e}$, are the element-side and apex-side triangles of their
  diamond.

  A triangle containing no middle vertex cannot contain an element vertex,
  because element vertices are adjacent only to middle vertices.  The only
  remaining triangles are therefore the central triangles, since
  apex--apex edges occur only within the central triangle of a single triple
  gadget.  Thus the complete list of graph triangles consists of the two
  triangles in each diamond and the central triangle in each triple gadget.
  In particular, no graph triangle contains internal vertices from two
  distinct triple gadgets.

  It follows that every middle vertex belongs to exactly two graph triangles,
  every apex vertex belongs to exactly two graph triangles, and every element
  vertex $e$ belongs to exactly
  $\operatorname{occ}(e)\in\{2,3\}$ graph triangles.  Hence every vertex
  belongs to at least one and at most three triangles, as required in the
  definition of \TP{}.

  The construction contains $3q+9|\mathcal S|$
  vertices and $18|\mathcal S|$ edges, and can therefore be carried out in
  polynomial time.

  It remains to characterize the triangle partitions of
  $G_{\mathcal I}$.  As is also visible in \cref{fig:diamond}, the two middle
  vertices of every diamond must be covered together by exactly one of its
  element-side and apex-side triangles.  The interaction of the three
  diamonds through the central triangle, shown in
  \cref{fig:triple-gadget}, leaves exactly two possible modes.

  If the central apex triangle is used, then all three apex vertices are
  already covered.  None of the three apex-side triangles can be used, so
  all three diamonds must use their element-side triangles.  If the central
  triangle is not used, every apex must instead be covered by the apex-side
  triangle of its diamond.  Thus every triple gadget has exactly two modes:
  \begin{itemize}
    \item the \emph{selected mode}, consisting of the central triangle and
      the three element-side triangles; this mode covers the three source
      element vertices of the triple;
    \item the \emph{unselected mode}, consisting of the three apex-side
      triangles; this mode covers no source element vertex.
  \end{itemize}

  Suppose first that $G_{\mathcal I}$ has a triangle partition, and let
  \[
    \mathcal M
      :=\{\tau\in\mathcal S:
           \text{the gadget of $\tau$ uses the selected mode}\}.
  \]
  The only triangles containing an element vertex $e$ are the element-side
  triangles in the gadgets of the source triples containing $e$.  Since the
  triangle partition covers $e$ exactly once, exactly one triple in
  $\mathcal M$ contains $e$.  Hence the triples in $\mathcal M$ are pairwise
  disjoint and cover every element of
  $R\mathbin{\dot\cup}B\mathbin{\dot\cup}Y$.  Thus $\mathcal M$ is a
  perfect 3-dimensional matching.

  Conversely, suppose that the source instance has a perfect
  3-dimensional matching $\mathcal M\subseteq\mathcal S$.  Use the selected
  mode for every gadget corresponding to a triple in $\mathcal M$, and use
  the unselected mode for every other gadget.  These choices cover every
  apex and middle vertex exactly once.  Since $\mathcal M$ covers every
  source element exactly once, they also cover every element vertex exactly
  once.  The chosen graph triangles therefore form a triangle partition of
  $G_{\mathcal I}$.

  We have proved that
  \[
    \mathcal I\text{ has a perfect 3-dimensional matching}
    \quad\Longleftrightarrow\quad
    G_{\mathcal I}\text{ has a triangle partition}.
  \]
  The reduction is polynomial, and its output is always a valid instance of
  \TP{}.  Therefore \TP{} is NP-hard.  Together with membership in
  $\mathrm{NP}$, this proves that \TP{} is NP-complete.
\end{proof}

\bibliographystyle{alphaurl}
\bibliography{references}

\end{document}